\newtheorem{theorem}{Theorem}
\newcommand{\rep}[1]{{\mathbf{#1}}}
\newcommand{\repbar}[1]{{\overline{\mathbf{#1}}}}
\begin{document}


\title{A Survey of Family Unification Models with Bifundamental Matter}



\author{Elijah Sheridan}
\email{es888@cornell.edu}
\affiliation{Department of Physics and Astronomy, Vanderbilt University, Nashville, TN 37235}
\affiliation{Department of Physics, Cornell University, Ithaca, NY 14853, USA}

\author{Thomas W. Kephart}
\email{tom.kephart@gmail.com}
\affiliation{Department of Physics and Astronomy, Vanderbilt University, Nashville, TN 37235}

\date{\today}

\begin{abstract}

Extensions of the Standard Model have been attempted from the bottom up and from the top down yet there remains a largely unexplored middle ground. In this paper, using the Mathematica package LieART, we exhaustively enumerate embeddings of the Standard Model within the class of theories with  bifundamental fermions in product gauge group $SU(a) \times SU(b) \times SU(c)$ with no more generators than $E_6$, while achieving SM family unification rather than replication. We incorporate simple phenomenological constrains and find $151$ unique models, including $9$ that have only vector-like particle content beyond-the-Standard Model (BSM) particles, which we conjecture belong to $5$ infinite families of such models. We describe the potentially most viable models: namely, the $9$ with strictly vector-like BSM content along with the $29$ models we found with no more than $30$ additional BSM chiral particles. These include models with fractional electric charge color singlets, and hence magnetic monopoles with multiple Dirac charge. This latter collection of models predicts chiral particles with masses near the electroweak scale  accessible to current and future collider experiments.
\end{abstract}

\pacs{}

\maketitle


\section{Introduction}

It has been a long held and very ambitious goal to derive the standard model (SM) of 
particle physics from string theory. While this has yet to be accomplished, one can still make
progress using string theory as a guide. On the string side various compactafications from ten dimensions (10D)
down to our physical four dimensions (4D) result in a variety of promising gauge theories. These top down models 
have intriguing properties,  but are still some distance from the SM. From the SM side one can try a bottoms up
approach and attempt to unify the SM into a larger gauge group to generate something close to what we obtain
from strings. 

There is also a middle ground where we allow more freedom in our choice of gauge group and
particle spectrum from what has been derived from strings at present, and a wider range of unifications 
than typically used  for the SM. These ``string inspired'' models hold promise for matching high and low scales and are the topic of this work. In particular, we investigate a class of bifundamental models inspired by orbifolding $AdS_5 \times S^5$ by a discrete group $\Gamma$ \cite{Lawrence:1998ja}, which generates 4D theories with gauge groups of the 
type $SU(n_1N)^{q_1}\times SU(n_2N)^{q_2}\times ...$ where the $n_i$s are the dimensions of the irreducible representations (irreps) of $\Gamma$, and the $q_i$ also depend on the choice of $\Gamma$. The fermions will be in bifundamental irreps of the gauge group. (For details see \cite{Lawrence:1998ja}.)  These models are often called quiver gauge theories \cite{Uranga:2000ck,Cachazo:2001gh,Brax:2002rz,Antebi:2005hr,Burrington:2006uu,Burrington:2007mj,Frampton:2007fr,Nekrasov:2012xe,He:2018gvd}.

Here we investigate theories in that middle ground, with a focus on smaller gauge groups   of the form $SU(a)\times SU(b)\times SU(c)$. That is, we limit our study to models whose gauge groups have three $SU$ factors. Such theories could arise, for example, from orbifold constructions with small discrete group $\Gamma$ or a larger discrete group followed by rounds of spontaneous symmetry breaking (SSB).

Beyond the well studied cases of  $SU(4)\times SU(2)\times SU(2)$, the Pati-Salam model \cite{Pati:1974yy} and $SU(3)\times SU(3)\times SU(3)$ trinification (TR) \cite{trin}, the cases $SU(3)\times SU(3)\times SU(4)$, $SU(3)\times SU(4)\times SU(4)$ and
$SU(3)\times SU(3)\times SU(5)$, are the smallest 
gauge groups of interest. It is at this stage (i.e., at the ``$abc$'' scale) that we set up our initial models, and require the product group have dimension no greater than that of $E_6$ and that the initial fermions  live only in bifundamental irreps of the gauge groups such that there are no gauge anomalies.  We refer to such models as Small Bifundamental Theories (SBTs). Further SSB then takes a SBT to the SM gauge group. The reason we study these cases,  beyond their relevance for matching low  and high energy scales of string theoretic models and the SM, respectively, is that they can have natural family unification, where SM fermion families must come in multiples of three to be chiral anomaly free in the initial $abc$ gauge group.

Starting with the minimal set of $SU(a)\times SU(b)\times SU(c)$ bifundamentals, we look at all possible models that can contain the SM with only 3 families. Even this minimal choice generates a large number of models, so to deal with them we use the Mathematica package LieART \cite{Feger:2019tvk,Feger:2012bs} to generate and analyze them. As we will find, the number of models grows quickly with the size of the initial product gauge group. To improve the utility of our results, we incorporate some basic phenomenological information. If some of the extra fermions in a chiral extension model are charged and remain massless (i.e., with no allowed mass term) after SSB at the SM level, then we discard the model as there are no observed massless charged particles. Additionally, theories with BSM chiral particles that are dual to SM particles are also omitted, as we require, and it has been empirically verified that none of the SM particles belong to vector-like pairs.
We also note that models with four or more families are currently disfavored but we will record them for completeness. 

To summarize, the three family models fall into subclasses which are: (i) ``pristine models'' where only the three families are chiral and all other fermions are vector-like (VL) (i.e., come in left-right pairs) and hence get masses at the $abc$ breaking scale, and (ii) ``chiral extensions'' where there are three families plus some extra chiral fermions, where the extra fermions in these models can be leptionic, hadronic or both.

The  pristine models that correspond directly to the SM at low energy with new physics only due to particles and interactions at the $abc$ scale. While the $abc$ scale is model dependent, it could be relatively low if the initial gauge couplings of 
$SU(a)$,  $SU(b)$ and $SU(c)$ differ, which is allowed depending on how the $abc$ gauge group is obtained from an initial quiver gauge theory. Otherwise it is likely to be at a typical GUT scale \cite{Georgi:1974sy}. 

More interesting but also more dangerous, are the chiral extension models, which lead to new physics guaranteed to be near the electro-weak (EW) scale.
Assuming we have rejected the models with massless charged particles, naturalness and perturbative stability requires the remaining chiral extension models to have chiral fermion near the EW scale. Current accelerator constraints can  then be applied to place further restrictions on these models. Hence, we anticipate these models will be either tightly constrained or eliminated by current data.

Our approach to finding potentially viable models is as simple as possible.
I.e., our interest here is limited to the fermionic spectrum of these models. Hence, we use the minimal phenomenological constrains discussed above plus generic
group theory analysis to identify these models. We do not look at the detailed 
phenomenology of any of these models, nor do we compare with or cite the many  interesting quiver gauge theory derived models or their phenomenology. While we hope to return elsewhere to the phenomenology of the most interesting models found here, such investigations are beyond the scope of this present work. This also means that the spontaneous symmetry breaking (SSB) paths we take are only chosen for their expediency for arriving at the fermion spectrum of the models, not to achieve a realistic phenomenology. Again, the detailed phenomenology of individual models is left to later work.
 
 We begin with a discussion of the set of product groups to be investigated and the method we will use, after which we tabulate our results and present our conclusions.
 
\section{Search Method}

\subsection{Product Groups}

It is well known that the gauge group of the SM can be most simply embedded in the chain of grand unified theories (GUTs),
$SU(5)\subset SO(10)\subset E_6$, but other possibilities abound, including $SU(N)$ for $N > 5$ and various product groups, where the minimal choices are the Pati-Salam group $SU(2)\times SU(2)\times SU(4)$ \cite{Pati:1974yy} which can be embedded in $SO(10)$ and the trinification group $SU(3)\times SU(3)\times SU(3)$ \cite{trin} which embeds in $E_6$ \cite{Gursey:1976ki,Achiman}. There are also models based of the product of more than three $SU(N)$ factors like the quartification models \cite{Foot,Babu:2003nw,Chen:2004jza,Demaria:2005gk,Demaria:2006uu,Demaria:2006bd,Eby:2011ph,Dent:2020jod}.

Here we will be interested in a certain class of product models that can be inspired by but not necessarily derived from string theory. Specifically, we consider models similar to those from orbifolded compactifications $AdS_5 \times S^5/\Gamma$ where $\Gamma$ is a finite group, either abelian \cite{Kachru:1998ys} or nonabelian \cite{Frampton:2000zy,Frampton:2000mq,Frampton:2007fr}. In the example of embedding the SM in a $SU(3)\times SU(3)\times SU(4)$ (334) model \cite{Kephart:2001ix,Kephart:2006zd}, several possible inequivalent embeddings lead to PS and TR type models, both with additional phenomenology plus new, and previously unexplored models. Most of these models had fractionally charged color singlets \cite{Kim:1980yk,Goldberg:1981jt}, and hence multiply Dirac charged magnetic monopoles. (Discussions of recent and ongoing searches for magnetic monopoles and particles with nonSM charges can be found in \cite{Alimena:2019zri,MoEDAL:2019ort,Mavromatos:2020gwk,Song:2021vpo}.) These studies were carried out by hand, and hence limited in scope, but we can now employ LieART to handle the more tedious part of the search for viable models and to recheck the previous results. This will allow us to extend our analysis to include a complete classification of the more general cases. 
There are many product models where the gauge group can be smaller, i.e., have as many or fewer generators, than $E_6$.
I.e., they include examples mentioned above and $SU(a)\times SU(b)\times SU(c)$ with $(a,b,c)$ such that
$a^2+b^2+c^2-3 \leq 78$.
As stated above, we refer to these models (with bifundamental fermionic content) as Small Bifundamental Theories (SBTs).
We will analyze all these cases and hence provide an overall prospective on the general class of string inspired $abc$ product group models.  We note that the $E_6\rightarrow TR \rightarrow SM$ and $SO(10) \rightarrow  PS\rightarrow SM$ decompositions have been extensively explored, while the $SU(a)\times SU(b)\times SU(c)\rightarrow (TR$ or $PS)$ $\rightarrow SM$ pathways have been neglected by comparison. See however \cite{Kephart:2017esj,Raut:2022ryj}.

One of the more pervasive new features we find is that many of these product group models contain chiral fermions  types not present in the SM. Hence they are potential candidates for new light ($\sim$ TeV) fermions near the electroweak (EW) scale. E.g., fractionally charged leptons could provide very interesting  BSM physics.  

To set the stage, we need to review the SM particle content and its embedding in some familar models. A standard family $\rep{F}$  contains fermions in the following  irreps of $SU(3)_C\times SU(2)_L\times U(1)_Y$
\begin{equation}
    \rep{F} = (\rep{3},\rep{2})_{\frac{1}{6}} + 
    (\rep{\bar{3}},\rep{1})_{\frac{1}{3}} + (\rep{\bar{3}},\rep{1})_{-\frac{2}{3}} + (\rep{1},\rep{2})_{-\frac{1}{2}} + (\rep{1},\rep{1})_{1}
\end{equation}
where we may or may not want to include a right-handed neutrino $(1,1)_{0}$. The family embeds directly in $SU(5)$ as 
$\rep{F} = \rep{{\bar 5}} + \rep{10}$ and in a spinor of $SO(10)$ if we include the right-handed neutrino $\rep{F} + \rep{1} = \rep{16}$. Including another neutral singlet and a conjugate $SU(5)$ vector pair we can write $\rep{F} + \rep{5} + \repbar{5} + \rep{1} + \rep{1} = \rep{27}$ as an $E_6$ family. The SM family also fits in the TR model where the trinification group is $SU(3)\times SU(3)\times SU(3)\subset E_6$. Here
\begin{equation}
    \rep{27} \rightarrow (\rep{3},\rep{\bar{3}},\rep{1}) + (\rep{1},\rep{3},\rep{\bar{3}}) + (\rep{\bar{3}},\rep{1},\rep{3}),
\end{equation}
where the bifundamental matter contains $\rep{F}$ plus additional vector like particles after SSB to the SM.
 
On the other hand  the Pati-Salam group is $SU(2)\times SU(2)\times SU(4)\subset SO(10)$ and
we expect bifundamental matter of the form 
\begin{equation}
    (\rep{4},\rep{{\bar 2}},\rep{1})+(\rep{{\bar 4}},\rep{1},\rep{2})+(\rep{1},\rep{2},\rep{\bar{2}}).
\end{equation}
However, the $\rep{2}$ is pseudo real so we can replace the $\rep{2}$ bars with $\rep{\bar{2}}$s and note that $(\rep{1},\rep{2},\rep{2})$ is vectorlike and has a direct mass term, hence it can be dropped, leaving us with just the right fermions to fit into the spinor of $SO(10)$
\begin{equation}
    \rep{16} \rightarrow  (\rep{4},\rep{2},\rep{1})+({\rep{\bar 4}},\rep{1},\rep{2}).
\end{equation}
 
Before looking at new individual models, let us first describe the scheme by which we find such models.
 
\subsection{General approach}
 
LieART is a Mathematica application to deal with Lie algebra representations and their tensor products. It allows us to implement an exhaustive search for all SBTs.

Indeed, we consider
\begin{equation}
    \mathcal{G} = \{G = SU(a) \times SU(b) \times SU(c) \; | \; \dim G \leq 78 = \dim E_6; a \geq 3; b, c \geq 2\}.
\end{equation}
We identify permutations: e.g., $SU(4) \times SU(3) \times SU(3) \sim SU(3) \times SU(4) \times SU(3)$. As required, all elements of $\mathcal{G}$ admit $G_{SM} = SU(3)_C \times SU(2)_L \times U(1)_Y$ as a subgroup and satisfy the desired low dimensionality. The set $\mathcal{G}$ contains $39$ groups, from $SU(3) \times SU(2) \times SU(2)$ to $SU(7) \times SU(4) \times SU(4)$.
 
For $G = SU(a) \times SU(b) \times SU(c) \in \mathcal{G}$, the goal is to take anomaly free bifundamental representations $\rep{R}$ of $G$ and decompose them to representations $\rep{R}'$ of $G_{SM}$ such that $\rep{R}_{SM} \subset \rep{R}'$. Here, we recall that
\begin{align}
    \rep{R} \label{eq:bifundamental}
    &= \frac{c}{n}(\rep{a}, \repbar{b}, \rep{1}) + \frac{a}{n}(\rep{1}, \rep{b}, \repbar{c}) + \frac{b}{n}(\repbar{a}, \rep{1}, \rep{c}) 
\end{align}
where $n = \text{gcd}(a,b,c)$ and
\begin{align}
    \rep{R}_{SM} \label{eq:SMrep}
    &= 3(\rep{F} + \rep{1}) = 3[ (\rep{3}, \rep{2})_\frac{1}{6} + (\repbar{3}, \rep{1})_\frac{1}{3} 
    + (\repbar{3}, \rep{1})_{-\frac{2}{3}} + (\rep{1}, \rep{2})_{-\frac{1}{2}} + (\rep{1}, \rep{1})_1 + (\rep{1}, \rep{1})_0].
\end{align}
Clearly a $G = SU(a) \times SU(b) \times SU(c) \in \mathcal{G}$ gives rise to a minimal anomaly free  bifundamental set $\rep{R}$ with $N_P(G) = 3\frac{abc}{n}$ particles.

We understand the symmetry breaking process by separating it into two parts: first, to $G$ we can associate $\tilde{G} = SU(3)_C \times SU(2)_L \times U(1)^m$, achieved by iteratively applying the symmetry breaking $SU(n) \to SU(n-1) \times U(1)$ to each factor in $G$, i.e., we consider only regular embeddings. In particular, $m = (a - 3) + (b - 2) + (c - 1)$. We refer to this symmetry breaking as non-Abelian breaking (NAB). (We do not give specifics other than noting that this step of spontaneous symmetry breaking (SSB) can be accomplished with adjoint scalars.) Then, we break $\tilde{G} \to G_{SM}$ (i.e., $U(1)^m \to U(1)_Y$) by choosing a linear combination of the $m$ $U(1)$ factors to identify with $U(1)_Y \subset G_{SM}$. We denote this symmetry breaking by Abelian breaking (AB). SSB is now more involved and requires that we break rank by $m-1$ while leaving the linear combination $Y$ unbroken. The details are not needed at present and outside the scope of what we study here. 

For the remainder of this section, we give an overview of how we performed exhaustive searches over all possible NABs and ABs  (because for each $G \in \mathcal{G}$ there can be several inequivalent paths to $G_{SM}$) and briefly summarize our results. The next section will  explicitly describe the technical details of our algorithm for computing ABs, while the current section  features a more qualitative overview.

\subsection{Non-Abelian Symmetry Breakings}

A NAB is specified by a choice of $SU$ factor of $G$ in which to embed $SU(3)_C \subset G_{SM}$ and a choice of a distinct $SU$ factor in which to embed $SU(2)_L$. Moreover, we are interested only in the NABs for which each $SU(3)_C \times SU(2)_L$ irrep in $\rep{R}_{SM}$ can be found in the decomposition of $\rep{R}$ to $\tilde{G}$. While, \textit{a priori}, this property depends upon the NAB and not merely on the gauge group, we find that either all viable NABs for a given $G \in \mathcal{G}$ have this property or none of them do. In particular, let $\mathcal{G}' \subset \mathcal{G}$ denote the set of groups with this property: it turns out that
\begin{equation}
    \mathcal{G}' = \{ ( SU(a) \times SU(b) \times SU(c) \in \mathcal{G} \; | \; 2 \notin (a,b,c) \; \& \, \gcd(a,b,c) = 1 \}
\end{equation}
To understand why $\mathcal{G}'$ has the structure that it does, observe that for $SU(N) \times SU(2) \times SU(2)$, $SU(3)_C$ must be contained in the $SU(N)$ and $SU(2)_L$ must be in an $SU(2)$, in which case all the leptons have the same hypercharge, which is incompatible with the SM. With a bit more work one can show that $3$ SM families can not be obtained from $SU(N) \times SU(N') \times SU(2)$. Additionally, we find that the greatest common divisor must also be trivial in our case because otherwise the coefficients Eq. (\ref{eq:bifundamental}) will be less than $3$ and prevent decomposition into three SM families. Hence we find the set $\mathcal{G}'$ contains only 13 groups. From here, to parameterize our NABs, we define
\begin{equation}
    \mathcal{E} = \{ (a,b,c) \; | \; SU(a) \times SU(b) \times SU(c) \in \mathcal{G}', a,b,c \geq 3 \} 
\end{equation}
where each element of $\mathcal{E}$ is an ordered triple. We interpret $E = (a,b,c) \in \mathcal{E}$ as the choice of $SU(a) \times SU(b) \times SU(c) \in \mathcal{G}'$ as the gauge group; $SU(a) \to SU(3)_C, SU(b) \to SU(2)_L$ as the NAB; and $\rep{R}$, as given by Eq. (\ref{eq:bifundamental}), as the fermion representation of the unbroken theory, which decomposes to $\rep{R}_E$ per the above symmetry breaking. The set $\mathcal{E}$ contains $54$ elements. We note that groups in $\mathcal{G}$ contribute $1$, $3$, or $6$ triples to $\mathcal{E}$ depending on whether $a,b,c$ are identical, contain an identical pair, or are distinct, respectively. For example, $SU(4) \times SU(3) \times SU(3) \in \mathcal{G}'$ makes three contributions to $\mathcal{E}$, $(4,3,3), (3,4,3)$ and $(3,3,4)$.

We need not consider regular embeddings wherein $SU(3)_C \times SU(2)_L \subset G_{SM}$ is contained in a single $SU$ factor in $G$. This is because, when breaking $SU(n)$, the fundamental irrep $\rep{n}$ can never decompose to the $(\rep{3},\rep{2})$ of $\rep{R}_{SM}$, hence such SM embeddings are precluded for bifundamental representations.
 
\subsection{Abelian Symmetry Breakings}

Given a NAB $E = (a,b,c) \in \mathcal{E}$ with decomposed irrep $\rep{R}_E$ for $\tilde{G} = SU(3)_C \times SU(2)_L \times U(1)^m$, selecting an AB requires a choice of $m$ rational numbers $x = (x_1, \dots, x_n)$: then we have $\tilde{G} \to G_{SM}$, or $U(1)^m \to U(1)_Y$, by letting the hypercharge operator $Y$ be given by
\begin{equation}
    \label{eq:U1Y_from_U1m}
    Y = \sum_{i=1}^m x_i Z_i
\end{equation}
where $Z_i$ denotes the charge operator for the $i$th $U(1)$ factor in $\tilde{G}$. In summary, our symmetry breaking and representation decomposition takes the schematic form
\begin{center}
    \begin{tikzcd}[row sep=huge]
        G = SU(a) \times SU(b) \times SU(c) \in \mathcal{G}' \arrow[d, "{\text{NAB } (E \, \in \, \mathcal{E})}"'] & \rep{R} \arrow[d] \\ 
        \tilde{G} = SU(3)_C \times SU(2)_L \times U(1)^m \arrow[d, "{\text{AB } (x \, \in \, \mathbb{R}^m)}"'] & \rep{R}_E \arrow[d] \\
        G_{SM} & \rep{R}_x
    \end{tikzcd}
\end{center}

The choice of $x = (x_1, \dots, x_n)$ is necessary but insufficient for determining an AB which gives rise to the SM particles, but we will elaborate on the remaining required information in the next section. This next section will describe an algorithm for exhaustively searching for all $x \in \mathbb{R}^m$ (where $m$ changes between elements of $\mathcal{E}$, of course) which yield ABs such that $\rep{R}_x \supset \rep{R}_{SM}$, where $\rep{R}_x$ is the decomposition of $\rep{R}_E$ per the prescribed AB. 

Finally, to each of these ABs we will ultimately apply two readily-accessible phenomenological constraints. First, we accept only ABs that yield representations $\rep{R}_x$ for $G_{SM}$ which contain no massless charged chiral fermions after electroweak symmetry breaking, since they are strictly phenomenologically forbidden. Second, we do not allow any of the extra fermions in the chiral extension models to have quantum numbers conjugate to any SM particles. This would allow them to pair up with SM particles and produce mass terms well above the electroweak scale, and hence the model would not contain the requisite complete three families at low energy. After imposing these constraints, we are left with a set of (possibly) physically viable models.

\section{Search Algorithm}

As we have seen, the nonAbelian symmetry breaking is straight forward since there are only a few pathways to consider. On the other hand, the Abelian symmetry breaking is considerably more complicated and for this we have developed the algorithm we now describe.

\subsection{Abelian Breaking Algorithm}

We begin with a representation $\rep{R}_E$ of $\tilde{G} = SU(3) \times SU(2) \times U(1)^m$ resulting from the decomposition of a bifundamental $\rep{R}$ for $G = SU(a) \times SU(b) \times SU(c) \in \mathcal{G}'$ undergoing a NAB specified by $E = (a, b, c) \in \mathcal{E}$. In general, $\rep{R}_E$ will contain terms of the following form
\begin{equation}
    \label{eq:rbar}
    \begin{aligned}
        \rep{R}_E &=
    \sum_{\alpha = 1}^A N_{\alpha} (\rep{3}, \rep{2})_{q_{\alpha 1} \dots q_{\alpha m}}
    + \sum_{\beta = A+1}^{A+B} N_{\beta}(\repbar{3}, \rep{1})_{q_{\beta 1} \dots q_{\beta m}} \\
    &\qquad + \sum_{\gamma =A+B+1}^{A+B+C} N_{\gamma} (\rep{1}, \rep{2})_{q_{\gamma 1} \dots q_{\gamma m}}
    + \sum_{\delta =A+B+C+1}^{A+B+C+D} N_{\delta } (\rep{1}, \rep{1})_{q_{\delta 1} \dots q_{\delta m}}.
    \end{aligned}
\end{equation}
Here, the $q_{i1}, \dots, q_{im}$ are the $m$ integer $U(1)$ charges associated with the $i^{th}$ irrep term in this sum and the integer $N_i$ denotes the multiplicity of that term, where $i = \alpha, \beta, \dots$. Our task here is to reduce $\tilde{G}$ to $G_{SM}$, decomposing $\rep{R}_E$ to $\rep{R}_x$ such that $\rep{R}_{SM} \subset \rep{R}_x$. We recall that this ultimately requires the determination of $m$ real numbers $x = (x_1, \dots, x_m)$ from which $U(1)_Y \subset G_{SM}$ is formed from $U(1)^m \subset G$. We can solve for this $x$ by selecting terms from Eq. (\ref{eq:rbar}) to identify with the SM particles. Explicitly, this means choosing a $\kappa \in \{1, \dots, A\}$, some $\lambda, \mu \in \{A+1, \dots, A+B\}$ ($\lambda \neq \mu$), a $\nu \in \{A+B+1, \dots, A+B+C\}$, and some $\xi, \rho \in \{A+B+C+1, \dots, A+B+C+D\}$ ($\xi \neq \rho$). This is the additional information needed to determine a AB as referenced in the previous section: an AB is determined by a six-tuple $w = (\kappa, \lambda, \mu, \nu, \xi, \rho)$ of integers providing identifications of terms in $\rep{R}_E$ with $\rep{R}_{SM}$ and an $m$-tuple $x = (x_1, \dots, x_m)$ of rational numbers determining a hypercharge which ensures the six distinguished particle irreps have the correct SM charges.

Given a six-tuple $w$ choosing which particles are to constitute the SM, we can solve for the $x$ determining the hypercharge. Our algorithm works by exhaustively iterating over all choices of $w$ and solving for $x$ in each case. From the combinatorics, the number of possible viable ABs is bounded above by $A \cdot B \cdot (B-1) \cdot C \cdot D \cdot (D - 1)$.

Concretely, given a choice of $w = (\kappa, \lambda, \mu, \nu, \xi, \rho)$, $x = (x_1, \dots, x_m)$ is given by the following linear equation.
\begin{equation}
    \begin{pmatrix}
    q_{\kappa 1} & q_{\kappa 2} & \dots & q_{\kappa m} \\ 
    q_{\lambda 1} & q_{\lambda 2} & \dots & q_{\lambda m} \\ 
    q_{\mu 1} & q_{\mu 2} & \dots & q_{\mu m} \\ 
    q_{\nu 1} & q_{\nu 2} & \dots & q_{\nu m} \\ 
    q_{\xi 1} & q_{\xi 2} & \dots & q_{\xi m} \\ 
    q_{\rho 1} & q_{\rho 2} & \dots & q_{\rho m}
    \end{pmatrix}
    \begin{pmatrix}x_1 \\ x_2 \\ x_3 \\ x_4 \\ \vdots \\ x_m \end{pmatrix} =
    \begin{pmatrix}\tfrac{1}{6} \\[0.1em] \tfrac{1}{3} \\[0.1em] -\tfrac{2}{3} \\[0.1em] -\tfrac{1}{2} \\[0.1em] 1 \\ 0\end{pmatrix}
    \label{linearEQ}
\end{equation}
As established above, the hypercharge operator $Y$ for $U(1)_Y$ is formed from the $x_i$ per Eq. (\ref{eq:U1Y_from_U1m}). 

As evidenced by the structure of our algorithm, we do not consider potential embeddings where we identify, say, $3(\rep{3}, \rep{2})_\frac{1}{6} \in \rep{R_{\text{SM}}}$ with only two distinct terms in $\rep{R'_{abc}}$, one with $N_i = 1$ and the other with $N_i = 2$. Thus, we enforce $N_i \geq 3$ for $i \in w = (\kappa, \lambda, \mu, \nu, \xi, \varepsilon)$ to ensure we achieve at least the three SM families. In particular, this is why the upper bound presented earlier failed to be an equality: in general, we are excluding terms with $N_i < 3$. 

Observe, of course, that the linear equation Eq.(\ref{linearEQ}) need not be solvable, as it may be overdetermined or underdetermined depending upon the sign of $m - 6$. We keep only cases where this system has a solution. In practice, over all elements of $\mathcal{E}$, we found that about $80\%$ of allowed choices of $w$ yield solvable linear equations. Certainly solutions also need not be unique: indeed, the space of solutions is an affine subspace with dimension equal to that of the null space of the matrix in Eq.(\ref{linearEQ}). In particular, if $m - 6$ is positive, this null space is necessarily non-trivial and there are infinitely many solutions. Each distinct (rational) solution in this affine space will, in general, yield different hypercharges for all particles not being identified with SM particles (that is, all particles with indices not equal any of the components of $w$). We handle this ambiguity by picking out the solution $x$ with the smallest norm: that is, the element of the solution space whose Euclidean distance from the origin is least. (This to be the default behavior of Mathematica's \texttt{LinearSolve} function\footnote{In particular, though the documentation doesn't make this obvious, the authors believe \texttt{LinearSolve} employs the Moore-Penrose inverse to solve underdetermined systems, which would indeed yield the least norm solution}). Physically, this choice would be made by giving VEVs to scalar fields that break the $m - 6$ $U(1)$ generators in such a way as to fix the minimal norm solution.
 
The minimal norm choice is physically motivated.  
Since the charges of the SM particles are already fixed, our choice of $x$ from a non-trivial solution space for eq.(\ref{linearEQ}) for only affects the non SM particles, some of which may be chiral. By choosing the smallest norm in $x_i$ we minimize those charges, which is a preferable choice as can be argued via an appeal to  Occam's razor. Moreover, we recall that we require rational solutions (to yield rational electric charges and preserve charge quantization), and it is a mathematical result that because the coefficients of eq.(\ref{linearEQ}) are rational, the least norm solution is also strictly rational as we desire. This is not a difficult result, but for clarity and thoroughness we prove it briefly as Theorem \ref{th} in the Appendix. In summary, then, the least norm solution is physically motivated in the sense that the total charge of the bifundamental irreps is minimized and because no irrational charges are observe in nature. Nevertheless, we wish to emphasize that when a choice of $w = (\kappa, \lambda, \mu, \nu, \xi, \rho)$ can yield a matrix of coefficients with a non-trivial kernel, which would necessarily entail the existence of infinitely many distinct models, of which we are only considering one in this paper, the minimal choice.

We have encountered a considerable number of instances where distinct choices of $w$ give rise to the same $x$ (and thus, the same hypercharge) and instances where distinct hypercharges give rise to the same particle content. Finally,   to avoid models of less interest, we are only treating three families as SM content: the appearance of further SM families within models are treated as BSM content.

\section{Results}

Let us denote the set of representations $\rep{R}_x$ of $G_{SM}$ arising from the ABs previously described (i.e., each containing $\rep{R}_{SM}$) by $\mathcal{M}$, and in particular define $\mathcal{M}_E$ to be the representations coming from the NAB given by $E \in \mathcal{E}$ (i.e., $\mathcal{M}$ is the union of the $\mathcal{M}_E$). Note that $\mathcal{M}$ is smaller than the number of surviving ABs, as we have as yet made no effort to remove duplicate representations, and different NABs and ABs that can (and often do) give rise to the same representations. Each element $\rep{R}_x \in \mathcal{M}$ takes the form $\rep{R}_x = \rep{R}_{SM} + \rep{R}_{VL} + \rep{R}_C$ where $\rep{R}_{VL}$ and $\rep{R}_C$ are vector-like and chiral representations respectively. From a physical standpoint, we are especially interested in elements of $\mathcal{M}$ satisfying $\rep{R}_C = 0$: that is, models with strictly VL BSM particle content (as these evade the stringent constraints on electroweak-scale BSM fermions established by accelerator experiments). We refer to these as VL extensions.

To facilitate a quantitative description of our results, we introduce some useful notation. In particular, we define $N_{AB}, N_{noMCF}, N_{no\overline{SM}}, N_Y, N_U, N_{VL} : \mathcal{E} \to \mathbb{Z}_{\geq0}$ as follows. Let $E \in \mathcal{E}$ be given. We let $N_{AB}(E)$ denote the number of \text{ABs} resulting in decomposed representations $\rep{R}_x$ containing $\rep{R}_{SM}$ found by our algorithm for the NAB $E$. We let $N_{noMCF}(E)$ and $N_{no\overline{SM}}(E)$ denote the number of these successful ABs which yield a decomposed $\rep{R}^\prime$ (for $G_{SM}$) containing \textit{no massless charged fermions} and \textit{no BSM chiral particles dual to SM particles} (i.e., particles which would make any of the SM particles vector-like). That is, $N_{noMCF}(E)$ and $N_{no\overline{SM}}$ describe the ABs surviving our phenomenological constraints when imposed sequentially. We let $N_Y(E)$ denote the number of unique hypercharges featured in the ABs arising from $E$. I.e., we identify ABs which have arrive at the same \textit{hypercharge} through different representation-SM particle identifications. We let $N_U(E)$ denote $|\mathcal{M}_E|$, or the number of \textit{unique} representations associated to all surviving ABs for the NAB $E$, i.e., identifying ABs which yield the same particle content. Finally, we let $N_{VL}(E)$ denote the number of elements in $\mathcal{M}_E$ satisfying $\rep{R}_C = 0$, i.e., the number of unique \textit{vector-like} extensions associated to $E$.

Given these definitions, we now present a general summary of our results, with specifics regarding each $E \in \mathcal{E}$ such that $N_{noMCF}(E) > 0$ being characterized in Table I\ref{table:results}. Each $E \in \mathcal{E}$ does yield $N_{AB}(E) > 0$, and over the entire set $\mathcal{E}$ we find the following.
\begin{equation}
    \begin{aligned}
        \sum_{E \in \mathcal{E}} N_{AB}(E) &= \mathcal{O}(1.6 \times 10^6) \\
        \sum_{E \in \mathcal{E}} N_{noMCF}(E) &= \mathcal{O}(7 \times 10^4) \\
        \sum_{E \in \mathcal{E}} N_{no\overline{SM}} &= \mathcal{O}(3.6 \times 10^4) \\
        \sum_{E \in \mathcal{E}} N_Y &= \mathcal{O}(2.8 \times 10^3) \\
        \sum_{E \in \mathcal{E}} N_U(E) = \sum_{E \in \mathcal{E}} |\mathcal{M}_E| &= 151 \\
        \sum_{E \in \mathcal{E}} N_{VL}(E) &= 9
    \end{aligned}
\end{equation}
In the next section, we investigate the VL extensions, each of which appear to belong to infinite classes of VL extensions if we consider infinite classes of $abc$ product gauge groups with dimension extending beyond that of $E_6$. Following this, we consider the chiral extension models (where $\mathcal{R}_C \neq 0$) with the fewest BSM chiral particles, which potentially also remain phenomenologically viable.
\begin{table}[]
\begin{adjustwidth}{-1in}{-1in}
\tiny
\centering
\begin{tabular}{|c|c|c|c|c|c|c|c|c|c|}
    \hline
    Gauge Group ($G \in \mathcal{G}'$) & $\dim G$ & $N_P(G)$ & \begin{tabular}{c} NAB \\ ($E \in \mathcal{E}$)\end{tabular} & $N_{AB}(E) $ & $N_{noMCF}(E)$ & $N_{no\overline{SM}}$ & $N_Y(E)$ & $N_U(E)$ & $N_{VL}(E)$ \\
    \hline
    \hline
    $SU(4) \times SU(3) \times SU(3)$
    & $31$ & $108$ & $(4,3,3)$ & 84 & 84 & 48 & 9 & 2 & 1 \\
    \hline
    \hline
    $SU(4) \times SU(4) \times SU(3)$
    & $38$ & $144$ & $(3,4,4)$ & 696 & 444 & 252 & 24 & 1 & 0 \\
    \hline
    \hline
    $SU(5) \times SU(3) \times SU(3)$
    & $40$ & $135$ & $(5,3,3)$ & 1086 & 1086 & 516 & 57 & 7 & 1 \\
    \hline
    \hline
    $SU(5) \times SU(4) \times SU(4)$
    & $54$ & $240$ & $(5,4,4)$ & 20880 & 9148 & 5124 & 440 & 17 & 0 \\
    \hline
    \hline
    $SU(5) \times SU(5) \times SU(3)$
    & $56$ & $225$ & $(3,5,5)$ & 16020 & 1280 & 1074 & 120 & 3 & 0 \\
    \hline
    \hline
    \multirow{2}{*}{\centering $SU(6) \times SU(4) \times SU(3)$}
    & \multirow{2}{*}{$58$} & \multirow{2}{*}{$216$} & $(4,3,6)$ & 20520 & 2496 & 2304 & 67 & 2 & 0 \\
    \cline{4-10}
    & & & $(4,6,3)$ & 4572 & 252 & 252 & 48 & 1 & 1 \\
    \hline
    \hline
    $SU(5) \times SU(5) \times SU(4)$
    & $63$ & $300$ & $(4,5,5)$ & 48400 & 4910 & 4360 & 353 & 13 & 0 \\
    \hline
    \hline
    $SU(7) \times SU(3) \times SU(3)$ 
    & $64$ & $189$ & $(7,3,3)$ & 9870 & 9870 & 4920 & 537 & 55 & 5 \\
    \hline
    \hline
    \multirow{2}{*}{\centering $SU(6) \times SU(5) \times SU(3)$ }
    & \multirow{2}{*}{$67$} & \multirow{2}{*}{$270$} & $(5,3,6)$ & 74370 & 5024 & 3264 & 93 & 4 & 0 \\
    \cline{4-10}
    & & & $(5,6,3)$ & 14352 & 468 & 336 & 60 & 1 & 1  \\
    \hline
    \hline
    $SU(7) \times SU(4) \times SU(4)$
    & $78$ & $336$ & $(7,4,4)$ & 78696 & 35222 & 13940 & 1008 & 45 & 0 \\
    \hline
\end{tabular}
\label{table:results}
\caption{Using the definitions of Sec. 5, this table gives a summary of SM embeddings for all $E \in \mathcal{E}$ such that $N_{no\overline{SM}}(E) > 0$.}
\end{adjustwidth}
\end{table}

Regarding this table, it is worth noting that the $(N,3,3)$ NABs have the unique property that all successful ABs automatically satisfy the first phenomenological constraint: that is, the constraint does not act as a constraint at all for these NABs.  

In general, to specify the embedding of a model in the initial gauge group $SU(a)\times SU(b)\times SU(c)$ we need to write the hypercharge in terms of the $U(1)$ charges. When $SU(3)_C$ is in $SU(a)$ and $SU(2)_L$ is in $SU(b)$ we write the $U(1)$ charges as $A_1, A_2, \dots, A_{a-3}; B_1, B_2, \dots, B_{b-2}; C_1, C_2, \dots, C_{c-1}$. Given a model, by which we mean a $\rep{R} = \rep{R}_{SM} + \rep{R}_{VL} + \rep{R}_C$ for the group $G_{SM}$, the choice of hypercharge operator $Y$ yielding that model from among all possible linear combinations of these $A_i, B_j, C_k$ is hardly ever unique: thus, in each case that follows we merely present a prototypical form for $Y$.

\section{Example Models}

Here we consider examples of viable models found by our analysis of SBTs. First we consider the pristine vector-like extension of the SM and then the lowest lying chiral extension models. Those with the minimal BSM chiral content are less likely to give difficulty satisfying phenomenological constraints.

\subsection{Vector-Like Extensions}

\subsubsection{N33 Classes}

We conjecture the existence of five infinite classes of VL extensions arising from $G = SU(N) \times SU(3) \times SU(3)$ with the NAB $SU(N) \to SU(3)_C,\, SU(3) \to SU(2)_L$. The $i$th such VL extension model is denoted by $\rep{R}^{N33,i}$, where $N$ cannot be divisible by $3$ (otherwise the model will not support $3$ SM families). We require $N \geq 4$ for $i = 1$, and $N \geq 7$ for $1 < i \leq 5$. We find that
\begin{equation}
    \begin{aligned}
      \rep{R}^{N33,i} &=\rep{R_\text{SM}} +\rep{R}^{N33,i}_\text{universal}+\rep{R}^{N33,i}_\text{unique}
     \end{aligned}
\end{equation}
where
\begin{equation}
    \begin{aligned}
        \rep{R}^{N33,i}_\text{universal} &= 
         3(\rep{3},\rep{1})_{\frac{1}{6}} + 3({\repbar{3}},\rep{1})_{-\frac{1}{6}} \\
        &\qquad + N(\rep{1},\rep{2})_{\pm\frac{1}{2}} + (4N - 18)(\rep{1},\rep{2})_0 \\
        &\qquad + (4N - 15)(\rep{1},\rep{1})_{\pm\frac{1}{2}} + (7N - 36)(\rep{1},\rep{1})_0  
    \end{aligned}
\end{equation}
and where  $\rep{R}^{N33,i}_\text{unique}$ takes one of the 5   following forms
\begin{align*}
    \rep{R}^{N33,1}_\text{unique}(N) &= 6(\rep{1},\rep{2})_0 + 6(\rep{1},\rep{1})_{\pm\frac{1}{2}} + 12(\rep{1},\rep{1})_0 \\
    \rep{R}^{N33,2}_\text{unique}(N) &= 3(\rep{1},\rep{2})_{\pm\frac{1}{2}} + 3(\rep{1},\rep{1})_{\pm 1} + 6(\rep{1},\rep{1})_{\pm\frac{1}{2}} + 6(\rep{1},\rep{1})_0\\
    \rep{R}^{N33,3}_\text{unique}(N) &= 3(\rep{1},\rep{2})_{\pm1} + 3(\rep{1},\rep{1})_{\pm\frac{3}{2}}  + 6(\rep{1},\rep{1})_{\pm1} + 3(\rep{1},\rep{1})_{\pm\frac{1}{2}} \\
    \rep{R}^{N33,4}_\text{unique}(N) &= 3(\rep{1},\rep{2})_{\pm\frac{3}{2}} + 3(\rep{1},\rep{1})_{\pm2}  + 6(\rep{1},\rep{1})_{\pm\frac{3}{2}} + 3(\rep{1},\rep{1})_{\pm1} + 3(\rep{1},\rep{1})_{\pm\frac{1}{2}} \\
    \rep{R}^{N33,5}_\text{unique}(N) &= 3(\rep{1},\rep{2})_{\pm\frac{1}{10}} + 3(\rep{1},\rep{1})_{\pm\frac{3}{5}} + 3(\rep{1},\rep{1})_{\pm\frac{2}{5}} + 6(\rep{1},\rep{1})_{\pm\frac{1}{10}}
\end{align*}
The terms in $ \rep{R}^{N33,i}_\text{universal}$
are quarks and antiquarks of electric charge $\pm \frac{1}{6}$, lepton of charge $\pm \frac{1}{2}$ and normally charged leptons.
Hence, this leads to heavy fractionally charged color singlets, and multiply charged magnetic monopoles.

Particles in the $ \rep{R}^{N33,i}_\text{unique}(N)$ representations are all leptons. Note all 5 $ \rep{R}^{N33,i}_\text{unique}(N)$ cases contain 36 states. For $i = 1,2$ the charges are all standard except of the electric charge $\pm \frac{1}{2}$ singlets. In $i = 3, 4$, we have leptons with exotic charges, namely $\pm 2$, $\pm \frac{3}{2}$, $\pm 2$, and $\pm \frac{5}{2}$ as well as normally charged leptons. Finally, $i = 5$ is the most exotic case with only non SM charges ranging from $\pm \frac{1}{10}$ to $\pm \frac{11}{10}$. Considering the trend from $i = 1$ to $i = 4$, it seems possible that sufficiently large $N$ might yield new classes with particles of the form $(\rep{1},\rep{2})_{\pm \frac{n}{2}}, (\rep{1},\rep{1})_{\pm\frac{m}{2}}$ for $n > 3, m > 4$.

Any of these models with heavy fractionally charged color singlets, must have a lightest such particle.
Since there is nothing lighter in the SM for which it to decay into, it must be stable, similar to a lightest supersymmetric partner (LSP) found in SUSY models.

Let $Y^{N33,i}(N)$ denote a $U(1)_Y$ charge operator associated with the symmetry breaking that leads  to $\rep{R}^{N33,i}$. Example of each case are
\begin{align*}
    Y^{N33,1}(N) &= \frac{1}{6} A_{N-3} + \frac{1}{4} C_1 + \frac{1}{4} C_2\\
    Y^{N33,2}(N) &= \frac{1}{10} A_{N-5} - \frac{1}{10} A_{N-4} + \frac{1}{6} A_{N-3} + \frac{1}{2} C_2\\
    Y^{N33,3}(N) &= \frac{1}{5} A_{N-5} - \frac{1}{5} A_{N-4} + \frac{1}{6} A_{N-3} + \frac{1}{2} C_2\\
    Y^{N33,4}(N) &= \frac{3}{10} A_{N-5} - \frac{3}{10} A_{N-4} + \frac{1}{6} A_{N-3} + \frac{1}{2} C_2\\
    Y^{N33,5}(N) &= \frac{1}{10} A_{N-5} + \frac{1}{15} A_{N-3} + \frac{1}{2} C_2
\end{align*}
We have confirmed these $5$ classes through $N = 11$.\\

\subsubsection{N63 Classes}

We conjecture the existence of two infinite classes of VL extensions arising $G = SU(N) \times SU(6) \times SU(3)$ with the NB $SU(N) \to SU(3)_C,\, SU(6) \to SU(2)_L$. The $i$th such VL extension model is denoted by $\rep{R}^{N63,i}$, where $N$ cannot be divisible by $3$, $N \geq 4$ for $i = 1$, and $N \geq 7$ for $i = 2$. We note that even the minimal case $N = 7$ in this second class of models is not an SBT (i.e., it has $91$ gauge generators) but we include the class here nevertheless for completeness. We find that
\begin{equation}
    \begin{aligned}
      \rep{R}^{N63,i} &=\rep{R_\text{SM}} +\rep{R}^{N63,i}_\text{universal}+\rep{R}^{N63,i}_\text{unique}
     \end{aligned}
\end{equation}
where
\begin{equation}
    \begin{aligned}
        \rep{R}^{N63,i}_\text{universal} 
        &= 3(\rep{3},\rep{1})_{\frac{2}{3}} + 3(\repbar{3},\rep{1})_{-\frac{2}{3}} + 3(\rep{3},\rep{1})_{-\frac{1}{3}} + 3(\repbar{3},\rep{1})_{\frac{1}{3}} + 6(\rep{3},\rep{1})_{\frac{1}{6}} + 6(\repbar{3},\rep{1})_{-\frac{1}{6}} \\
        &\qquad + N(\rep{1},\rep{2})_{\pm\frac{1}{2}} + (4N-18)(\rep{1},\rep{2})_{0} \\
        &\qquad + (N+3)(\rep{1},\rep{1})_{\pm1} + (12N - 48)(\rep{1},\rep{1})_{\pm\frac{1}{2}} + (16N - 42)(\rep{1},\rep{1})_{0} 
   \end{aligned}
\end{equation}
and where the $\rep{R}^{N63,i}_\text{unique}$ are
\begin{align*}
    \rep{R}^{N63,1}_\text{unique}(N) &= 6(\rep{1},\rep{2})_{0} + 18(\rep{1},\rep{1})_{\pm\frac{1}{2}} \\
    \rep{R}^{N63,2}_\text{unique}(N) &= 3(\rep{1},\rep{2})_{\pm\frac{1}{10}} + 9(\rep{1},\rep{1})_{\pm\frac{3}{5}} + 9(\rep{1},\rep{1})_{\pm\frac{2}{5}} + 12(\rep{1},\rep{1})_{\pm\frac{1}{10}} 
\end{align*}
Let $Y^{N63,i}(N)$ denote the $U(1)_Y$ hypercharge operator associated with the symmetry breaking procedure corresponding to $\rep{R}^{N63,i}$. Example forms for each of these cases are
\begin{align*}
    Y^{N63,1}(N) &= \frac{1}{6}A_{N-3} +  \frac{1}{6}B_4 -  \frac{1}{6}B_3 + \frac{1}{2}C_2\\
    Y^{N63,2}(N) &= \frac{1}{10}A_{N-5} + \frac{1}{15}A_{N-3} +  \frac{1}{6}B_4 -  \frac{1}{6}B_3 + \frac{1}{2}C_2
\end{align*}
We have also confirmed these $2$ classes through $N = 11$.\\

\subsection{Minimal Chiral Extensions}

Any fermion content beyond the three families of the SM is difficult to accommodate given current accelerator physics constraints \cite{PDG}. However, there are a number of low  lying chiral  extension of the SM that arise in our analysis that we include here for completeness as they may not all be totally ruled out.

If we identify models (of which we have $151$) with identical chiral BSM particle content, we are left with $123$ equivalence classes. In this section, we briefly consider equivalence classes with the fewest number of chiral particles beyond the SM, some of which---while they must have BSM fermions  masses near the electroweak scale---remain possibly phenomenologically viable. In particular, we present $\rep{R}_C$ for all models with up to $30$ BSM chiral particles, of which there are $13$ equivalence classes (corresponding to $38$ unique particle configurations if one distinguishes models with the same chiral content and distinct vector-like content). One of these equivalence classes, containing $9$ unique particle configurations, is where $\rep{R}_C$ is empty: the set of VL extensions, which we've already discussed. Thus, in this section we discuss the remaining $12$ equivalence classes ($29$ unique particle configurations). In each case, the NAB often is not unique, while the AB never is. Thus, we merely present one particular NAB-AB pair that achieves the model.
 
First we have $\rep{R}_C^{11}$ with $11$ (leptonic) particles as follows, where  superscripts will always denote the number of BSM chiral particles, with a letter appended to distinguish between distinct models with the same particle numbers.  We find
\begin{equation}
    \rep{R}_C^{11} = (\rep{1},\rep{2})_{\frac{3}{2}}+3(\rep{1},\rep{2})_{-\frac{1}{2}}+(\rep{1},\rep{1})_{-2}+2(\rep{1},\rep{1})_{1}
\end{equation}
This model can be realized only through the NAB $(7,3,3)$, for which we can choose $Y$ as 
\begin{equation}
    Y^{11} = \frac{1}{4}A_3 + \frac{5}{12}A_4 + \frac{1}{2}B_1 + C_2.
\end{equation}
Most particles in eq. (18) can acquire mass from the SM Higgs doublet, e.g., 
the mass terms for two of the three standard hypercharged doublets and right-handed singlets are just as in the SM. The $Y = \frac{3}{2}$ doublet mass term  of the form
\begin{equation}
    \begin{aligned}
        (\mathbf{1},\mathbf{2})^F_{\frac{3}{2}}(\mathbf{1},\mathbf{1})^F_{-2}(\mathbf{1},\mathbf{2})^H_{-\frac{1}{2}}
    \end{aligned}
\end{equation}
 generates a Dirac electric charged $2$ massive fermion.   We can write a further mass term using the third standard hypercharged doublet and the $(\mathbf{1},\mathbf{2})^F_{\frac{3}{2}}$.
An example of such a term could involve a Higgs triplet and be
\begin{equation}
    \begin{aligned}
        (\mathbf{1},\mathbf{2})^F_{\frac{3}{2}}(\mathbf{1},\mathbf{2})^F_{-\frac{1}{2}}(\mathbf{1},\mathbf{3})^H_{-1}
    \end{aligned}
\end{equation}
Hence we must extend the Higgs sector to avoid massless charged fermions.
Including dimension 5 operators could also accomplish the purpose, but would typically lead to a light charged lepton the violates observational bounds. Hence dimension 4 mass term are preferable. 

Next, we have $\rep{R}_C^{15a}$ with extra (leptonic) chiral particles 
\begin{equation}
    \rep{R}_C^{15a} = (\rep{1},\rep{2})_{-\frac{1}{2}} + 3(\rep{1},\rep{2})_{\frac{1}{6}} + (\rep{1},\rep{1})_1 + 3(\rep{1},\rep{1})_{-\frac{2}{3}} + 3(\rep{1},\rep{1})_{\frac{1}{3}}.
\end{equation}
This model can be realized through NABs $(N,3,3)$ for $N = 4,7$. For $(4,3,3)$, we can choose $Y$ as follows. 
\begin{equation}
    Y^{15a} = -\frac{1}{6}B_1 + \frac{1}{3}C_1
\end{equation}
The scalar sector can be simpler this time, since all mass terms require only standard Higgs doublets to deliver the extra SM-like term 
\begin{equation}
    \begin{aligned}
(\mathbf{1},\mathbf{2})^F_{-\frac{1}{2}}(\mathbf{1},\mathbf{1})^F_{1}(\mathbf{1},\mathbf{2})^H_{\frac{1}{2}}
\end{aligned}
\end{equation}
and three of the form
\begin{equation}
    \begin{aligned}
        (\mathbf{1},\mathbf{2})^F_{\frac{1}{6}}[(\mathbf{1},\mathbf{1})^F_{-\frac{2}{3}}+(\mathbf{1},\mathbf{1})^F_{\frac{1}{3}}](\mathbf{1},\mathbf{2})^H_{\frac{1}{2}}
    \end{aligned}
\end{equation}
Hence we have an extra SM electric charge 1 lepton plus three charge $\frac{2}{3}$ and three charge $\frac{1}{3}$ leptons, all with masses not far above the electroweak scale.
We also have triply charged magnetic monopoles in this model.

We next have $\rep{R}_C^{15b} = \rep{F}$, i.e., a fourth SM family. We exclude the extra left-handed antineutrino in our count because it can be VL with itself, and we do this every time we refer to $\rep{R}_{SM}$ in this section. This model can be realized through NABs $(N,4,4)$ for $N = 5,7$. For $(5,4,4)$, we can choose $Y^{15b}$ as follows.
\begin{equation}
    Y^{15b} = \frac{1}{6}A_2 + \frac{1}{4}C_2 + \frac{1}{4}C_3
\end{equation}

Next $\rep{R}_C^{16}$ contains only leptons
\begin{equation}
    \rep{R}_C^{16} = (\rep{1},\rep{2})_{\frac{9}{2}}+3(\rep{1},\rep{2})_{-\frac{3}{2}}+(\rep{1},\rep{1})_{-5}+(\rep{1},\rep{1})_{-4}+3(\rep{1},\rep{1})_{2}+3(\rep{1},\rep{1})_{1}
\end{equation}
But exotic electromagnetic charges like $Q=5$ would have distinctive signatures in HEP experiments.
This model can be realized only through the NAB $(7,3,3)$, for which we can choose $Y^{16}$ as follows.
\begin{equation}
    Y^{16} = \frac{3}{5}A_2 + \frac{9}{10}A_3 + \frac{1}{6}A_4 + \frac{3}{2}B_1 - C_1 + 2C_2
\end{equation}

Continuing, we have $\rep{R}_C^{24}$, which  contains a forth family of quarks, but only exotic leptons in the extended sector.
\begin{equation}
    \rep{R}_C^{24} = (\rep{3}, \rep{2})_{\frac{1}{6}} + (\repbar{3}, \rep{1})_{-\frac{2}{3}} + (\repbar{3}, \rep{1})_{\frac{1}{3}} + 
    3(\rep{1}, \rep{2})_{-\frac{1}{6}} + 3(\rep{1}, \rep{1})_{\frac{2}{3}} + 3(\rep{1}, \rep{1})_{-\frac{1}{3}}.
\end{equation}
This model can be realized through the NABs $(N,4,4)$ for $N = 3,7$; for $(3,4,4)$ we can choose $Y^{24}$ as follows.
\begin{equation}
    Y^{24} = -\frac{1}{6}B_2 - \frac{1}{3}C_2
\end{equation}

Next we have the leptonic $\rep{R}_C^{26}$ where
\begin{equation}
    \rep{R}_C^{26} = 3(\rep{1},\rep{2})_{\frac{5}{2}} + 5(\rep{1},\rep{2})_{-\frac{3}{2}} + 3(\rep{1},\rep{1})_{-3} + 2(\rep{1},\rep{1})_{2} + 5(\rep{1},\rep{1})_{1}.
\end{equation}
This model can be achieved only through the NAB $(5,3,3)$, for which we can choose $Y^{26}$ as 
\begin{equation}
    Y^{26} = -\frac{1}{2}A_1 + \frac{1}{6}A_2 - \frac{1}{2}B_1 + C_2.
\end{equation}

Another model with purely leptonic content in the extended chiral sector is $\rep{R}_C^{27}$.
\begin{equation}
    \rep{R}_C^{27} = 3(\rep{1},\rep{2})_{\frac{5}{6}} + 5(\rep{1},\rep{2})_{-\frac{1}{2}} + 3(\rep{1},\rep{1})_{-\frac{4}{3}} + 5(\rep{1},\rep{1})_{1} + 3(\rep{1},\rep{1})_{-\frac{1}{3}}
\end{equation}
This model can only be achieved through the NAB of $(5,3,3)$, for which we can choose $Y^{27}$ as  
\begin{equation}
    Y^{27} = \frac{1}{6}A_1 - \frac{1}{6}A_2 - \frac{1}{6}B_1 + \frac{1}{3}C_1
\end{equation}

Moving on, we have $\rep{R}_C^{29}$ with (leptonic) particles in the representations
\begin{equation}
    \rep{R}_C^{29} = 3(\rep{1},\rep{2})_{-\frac{1}{2}} + 5(\rep{1},\rep{2})_{\frac{3}{10}} + 3(\rep{1},\rep{1})_{1} + 5(\rep{1},\rep{1})_{-\frac{4}{5}} + 5(\rep{1},\rep{1})_{\frac{1}{5}}
\end{equation}
This model can only be achieved through the NAB of $(5,3,3)$, for which we can choose $Y^{29a}$ as  
\begin{equation}
    Y^{29} = \frac{1}{10}A_1 + \frac{1}{6}A_2 + \frac{1}{10}B_1 + \frac{1}{10}C_1 + \frac{1}{2}C_2
\end{equation}

Finally we arrive at cases with 30 extra fermions. We have $\rep{R}_C^{30a} = 2\rep{R}_C^{15a}$, achievable only in the NAB of $(5,3,3)$ with a choice of $Y^{30a}$ as 
\begin{equation}
    Y^{30a} = -\frac{1}{6}B_1 + \frac{1}{3}C_1
\end{equation}
Next we have $\rep{R}_C^{30b}$ with (leptonic) particles
\begin{equation}
    \rep{R}_C^{30b} = 3(\rep{1},\rep{2})_{\frac{11}{6}} + 3(\rep{1},\rep{2})_{-\frac{3}{2}} + 2(\rep{1},\rep{2})_{-\frac{1}{2}} + 3(\rep{1},\rep{1})_{-\frac{7}{3}} + 3(\rep{1},\rep{1})_{2} + 3(\rep{1},\rep{1})_{-\frac{4}{3}} + 5(\rep{1},\rep{1})_{1}.
\end{equation}
This model can only be achieved through the NAB of $(5,3,3)$, for which we can choose $Y^{30b}$ as follows.
\begin{equation}
    Y^{30b} = \frac{5}{12}A_1 - \frac{5}{12}A_2 - \frac{1}{6}B_1 + \frac{1}{3}C_1
\end{equation}
Next, we have $\rep{R}_C^{30c}$ with the following particles in the chiral extension
\begin{equation}
    \rep{R}_C^{30c} = \rep{F} + (\rep{1}, \rep{2})_{-\frac{1}{2}} + 3(\repbar{1}, \rep{2})_{\frac{1}{6}} + (\repbar{1}, \rep{1})_{1} + 3(\repbar{1}, \rep{1})_{-\frac{2}{3}} + 3(\repbar{1}, \rep{1})_{\frac{1}{3}}
\end{equation}
This model can be achieved through the NABs of $(N,4,4)$ for $N = 5,7$; for $(5,4,4)$, we can choose $Y^{30c}$ as   
\begin{equation}
    Y^{30c} = -\frac{1}{6}B_2 + \frac{1}{3}C_2
\end{equation}
Lastly, we have the possibility $\rep{R}_C^{30d} = 2\rep{F}$ realizable only through the NAB of $(4,5,5)$, with a choice of $Y^{30d}$ as 
\begin{equation}
    Y^{30d} = \frac{1}{6}A_1 + \frac{1}{2}C_3.
\end{equation}

\section{Summary and Conclusion}

\begin{figure}[!h]
    \centering
    \includegraphics[scale=0.3]{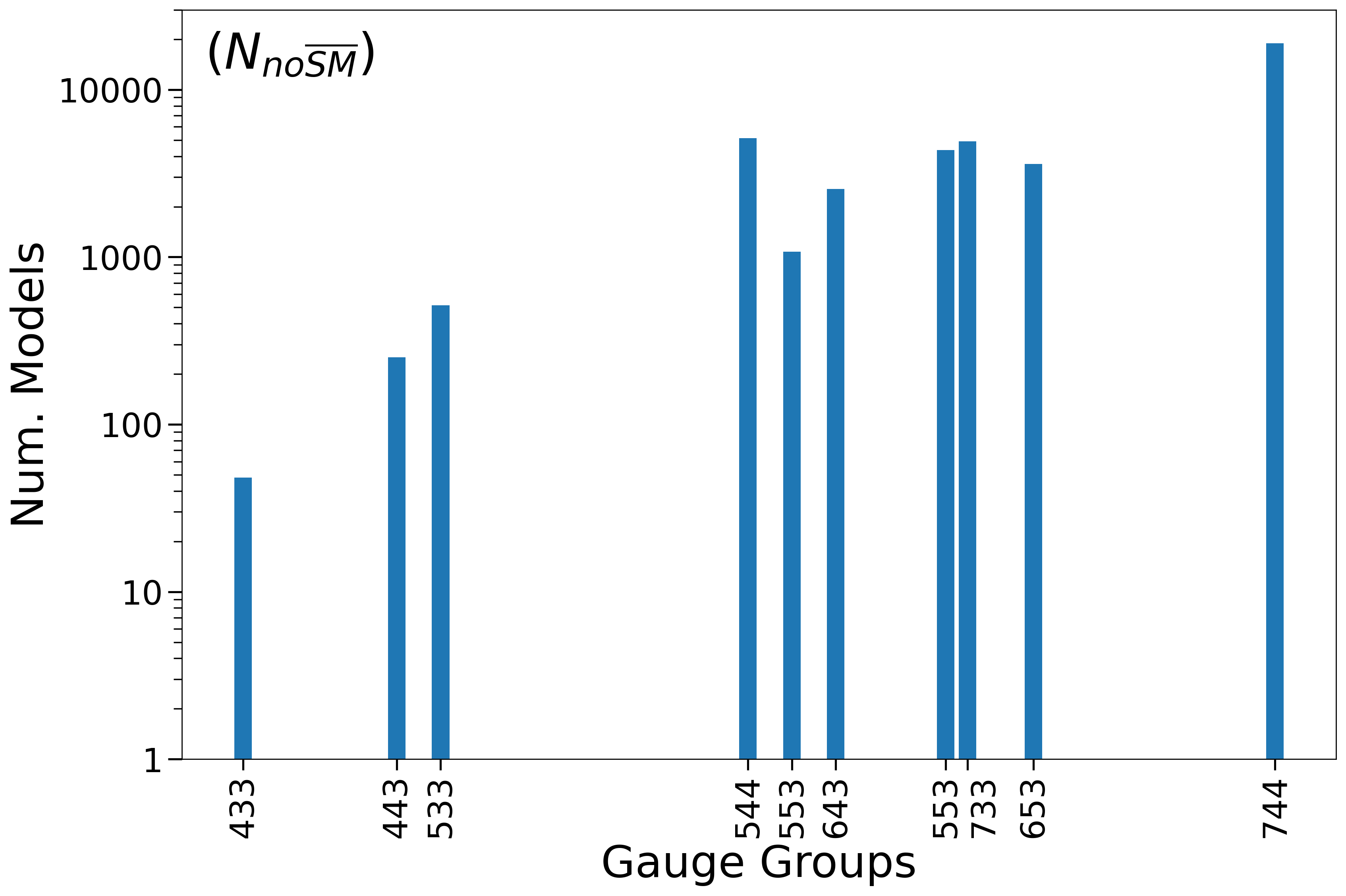}\hfill
    \includegraphics[scale=0.3]{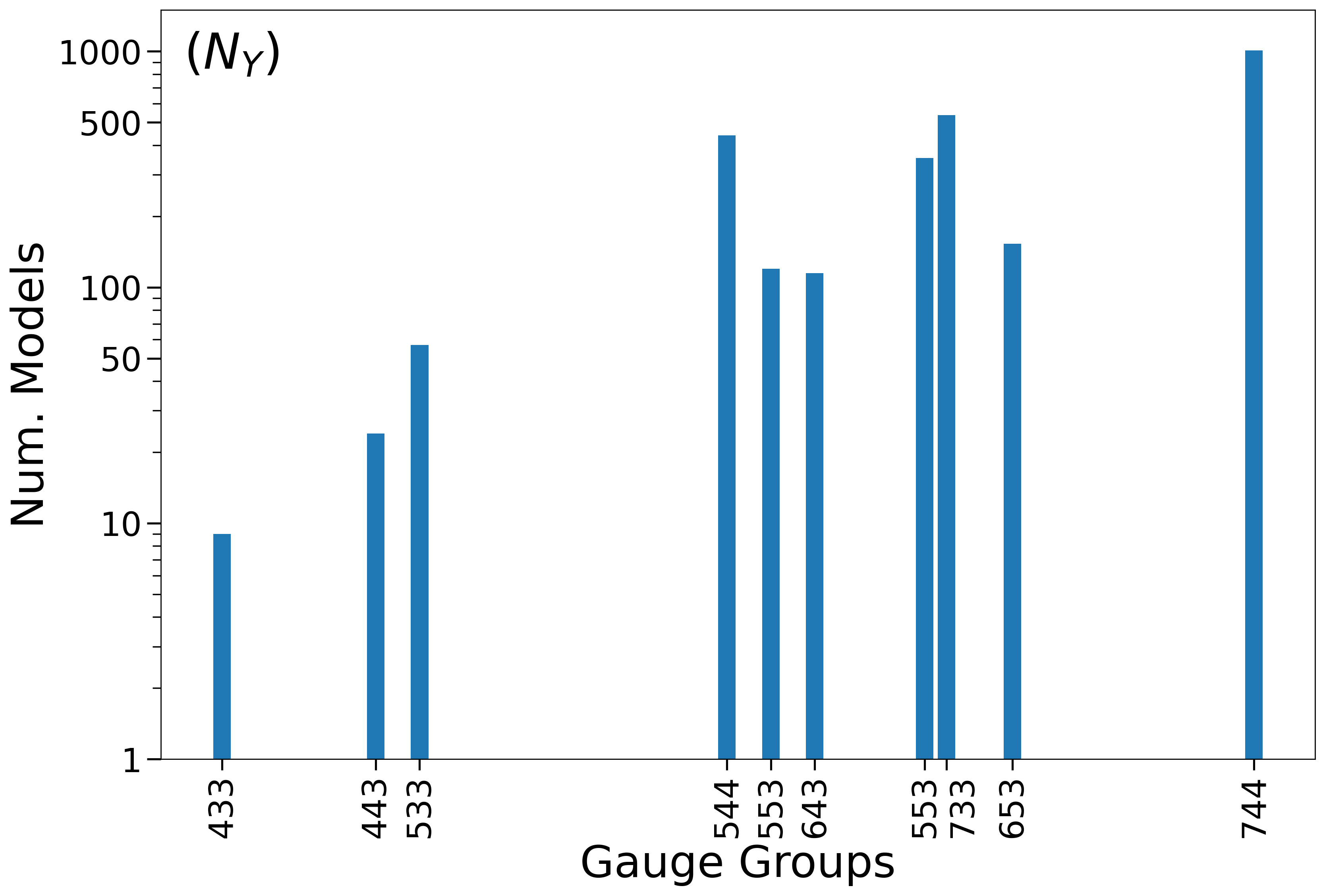}\hfill
    \includegraphics[scale=0.3]{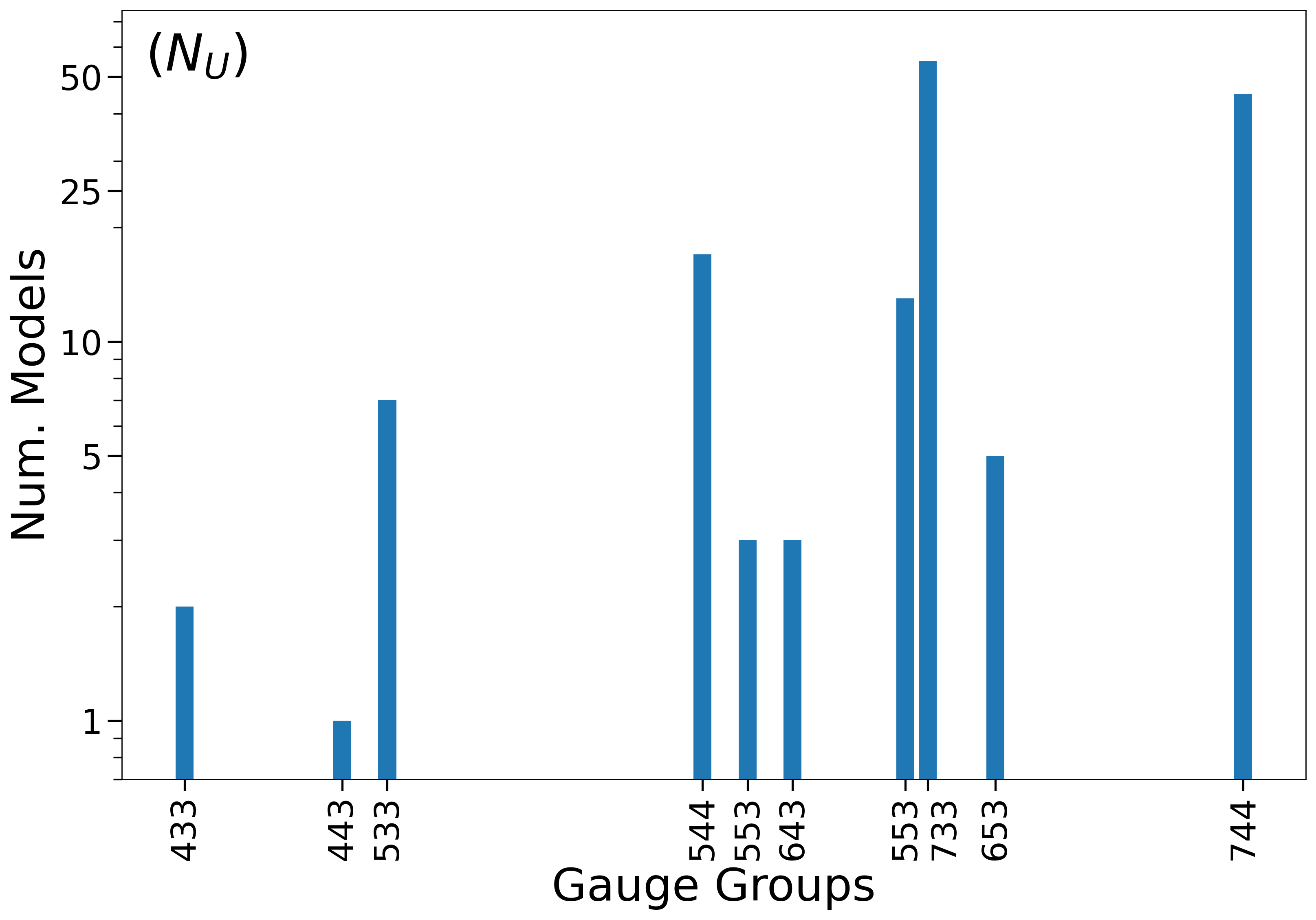}
    \caption{Summary of viable models  presented as plots, corresponding to each of our three methods of counting models, of the number of models we find as a function of the gauge group. }
    \label{summary_plots_gauge_groups}
\end{figure}

\begin{figure}[!h]
    \centering
    \includegraphics[scale=0.3]{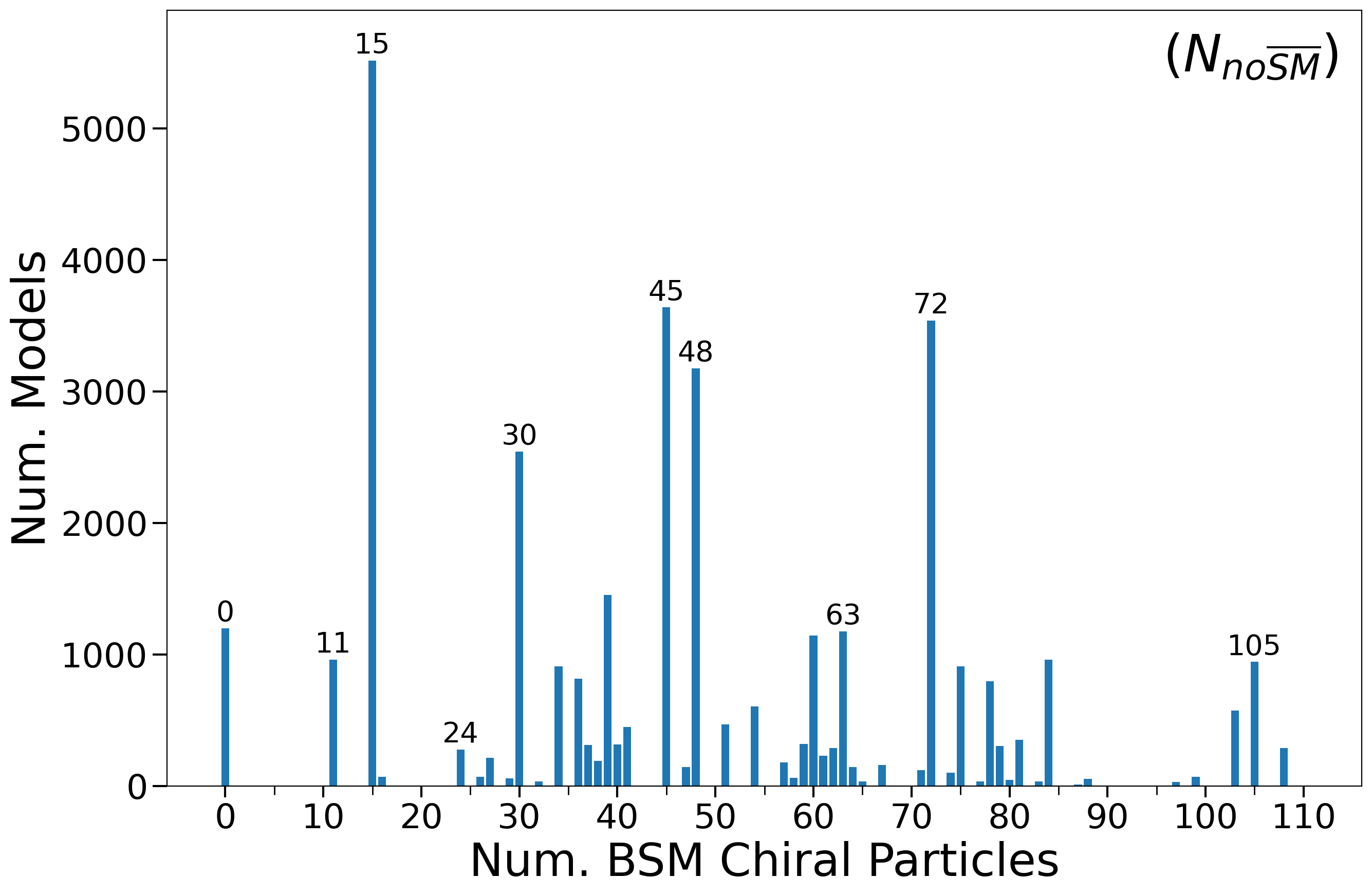}\hfill
    \includegraphics[scale=0.3]{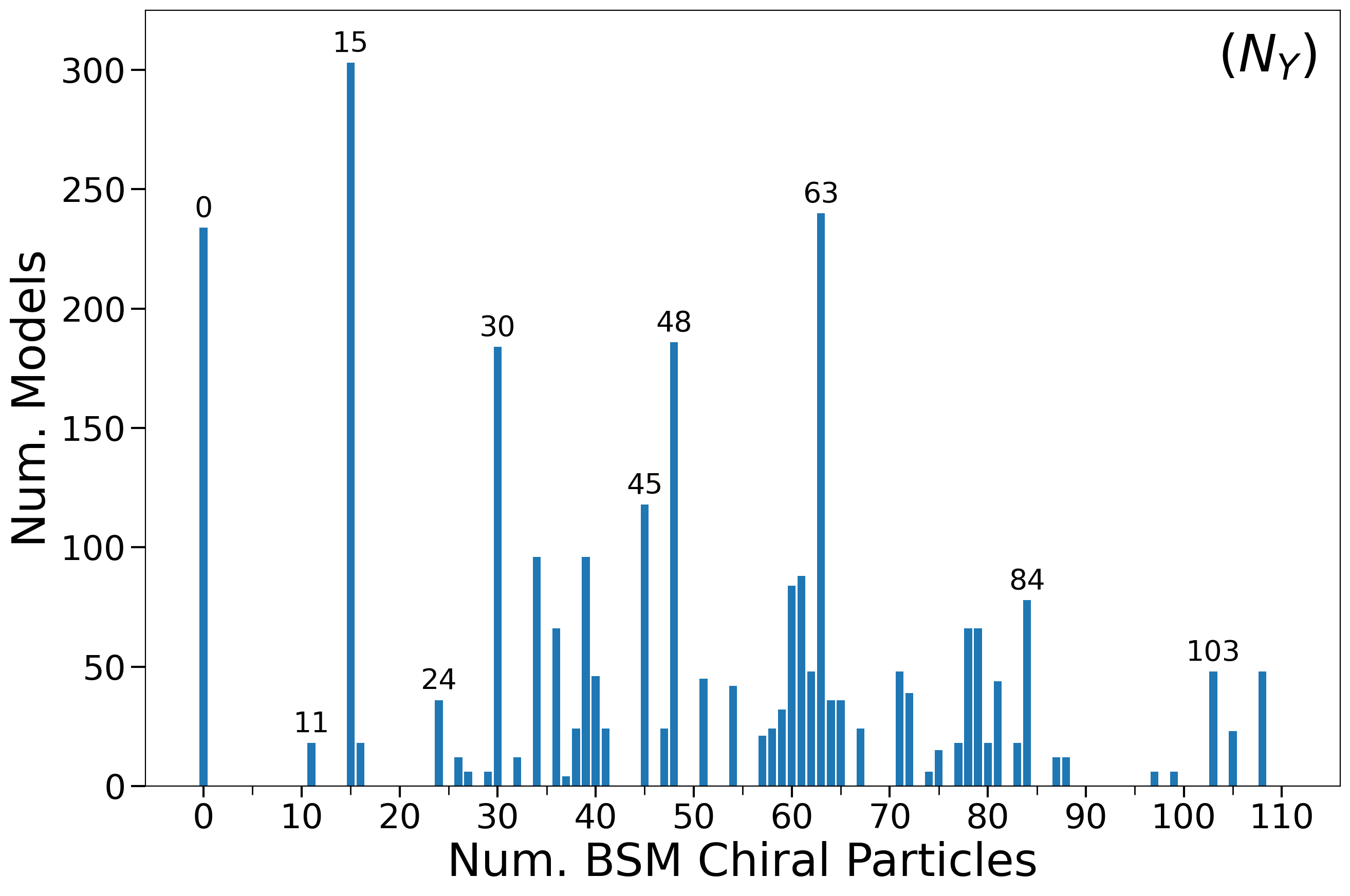}\hfill
    \includegraphics[scale=0.3]{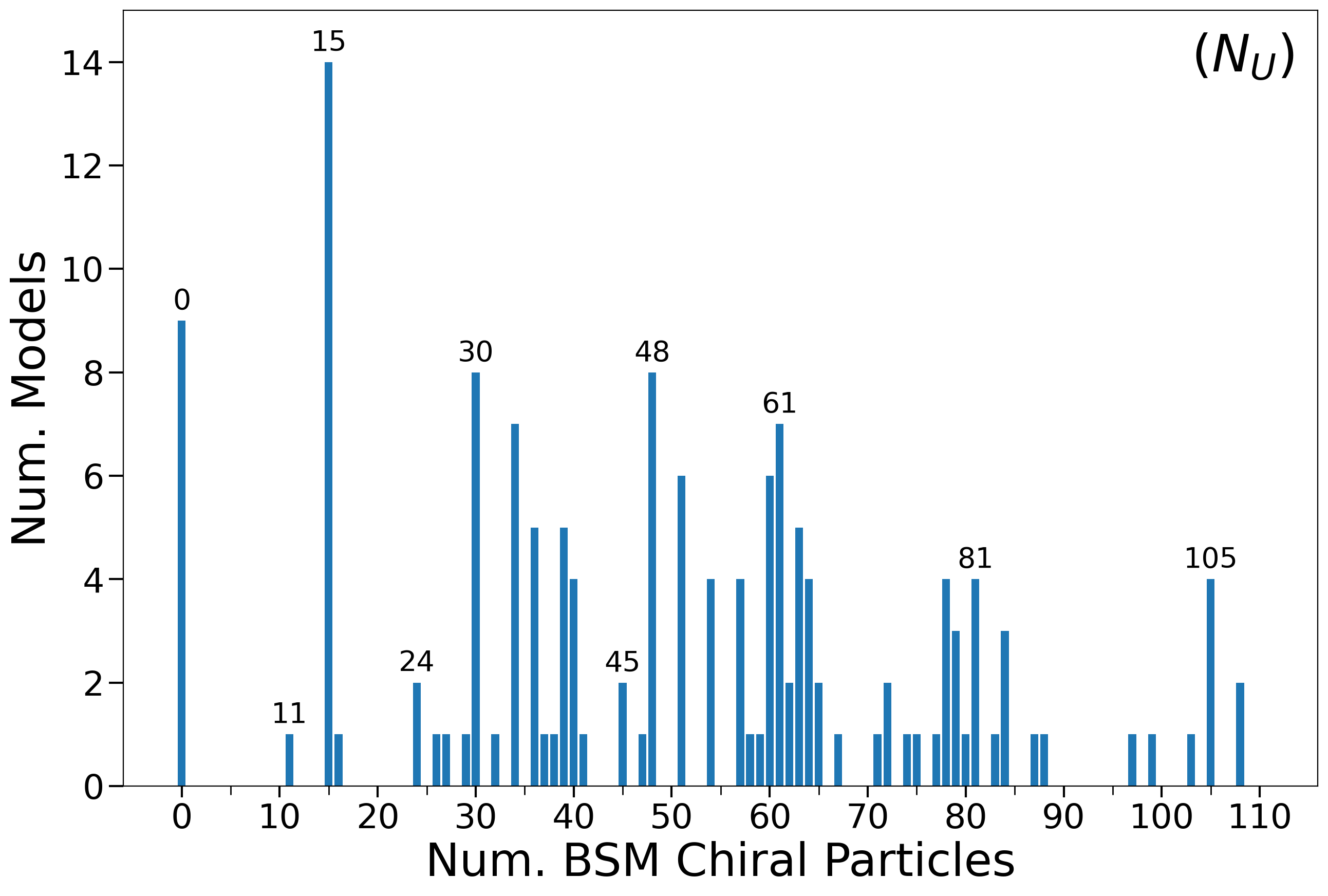}
    \caption{Summary of viable models 
presented as plots, corresponding to each of our three methods of counting models, of the distribution of the number of chiral BSM particles we find in all models across all NABs. }
    \label{summary_plots_all_NABs}
\end{figure}

\begin{figure}
    \includegraphics[scale=0.3]{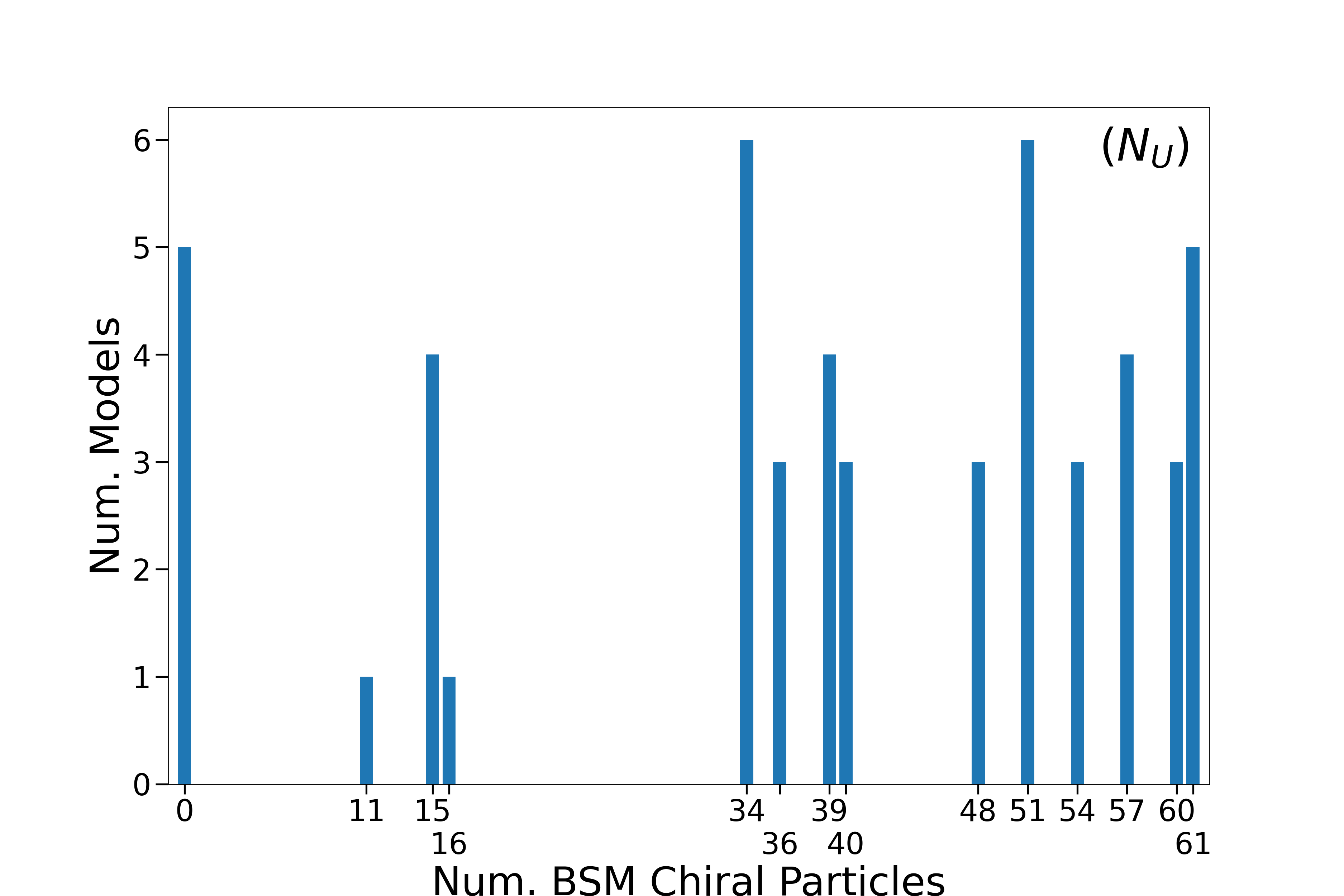}
    \caption{Summary of potentially viable $SU(7) \times SU(3) \times SU(3)$ models. }
    \label{733_plot}
\end{figure}
 
The results presented here are treated without mention of supersymmetry. However,  all models could
also potentially be ${\cal N}=1$ SUSY models assuming for instance they arise as orbifolded AdS/CFTs \cite{Lawrence:1998ja}.
 
Among the possible SBTs, we find that only a very small percentage of models have three SM families and nothing more. Of the $151$ three family models with no more than 78 generators, only 9 do not have extra chiral content. However, we do find that the models with strictly VL particle content beyond the SM are constituents of seemingly infinite classes of models with arbitrarily large gauge groups: in particular, we find seven such classes.

Table I efficiently summarizes  our findings: at this point, now we elaborate on those results through plots characterizing the distribution of number of models among our gauge groups and the number of chiral particles among our models.

However, we do have a choice to make: namely, what do we consider to be a model for the purposes of these plots? In Table I, we count models in a few different ways, even if we require that a decomposed representation $\rep{R}_x$ contains $\rep{R}_{SM}$ and satisfies our two phenomenological constraints. In particular, we could let a model be a successful AB (characterized by a choice of $w$ and $x$ as defined in our Search Algorithm section), which is how we counted models with $N_{no\overline{SM}}$; alternatively, when counting models we could identify ABs with the same $x$ (the same hypercharge), which corresponds to how we counted models for $N_Y$; or a model could mean a unique set of particles (i.e., identify ABs that yield the same BSM particles), which is how we counted models for $N_U$.

To be comprehensive, we have decided to provide the histograms associated with all three choices described in the above paragraph, and we always clarify which counting choice we have made in either the upper right or upper left corner of the plot by writing either $N_{no\overline{SM}}$, $N_Y$, or $N_U$. Now, having discussed the ways in which we count our models, we can  turn our attention to the figures themselves. 

Fig. \ref{summary_plots_gauge_groups} plots the number of models we find as a function of the gauge group. Note that we encounter two instances where a gauge group admits two viable NABs---$SU(6) \times SU(4) \times SU(3)$ and $SU(6) \times SU(5) \times SU(3)$---in which case we of course attribute both NABs to the given gauge group.     It is no  surprise that the number of models increases considerably with the dimension of the gauge group (note that the y-axis is on a log scale), although this increase is not strictly monotonic. We present plots corresponding to each of our three methods of counting models, and there are differences in shape between the three plots. For instance, we find that $SU(4) \times SU(4) \times SU(3)$ and $SU(5) \times SU(3) \times SU(3)$ yield similar numbers of viable numbers of ABs (as seen in the topmost plot) yet for the former group a considerably larger portion of these ABs end up yielding just a single unique configuration of particles, while the latter group's ABs culminate in $9$ models with unique particle content.

Fig. \ref{summary_plots_all_NABs} visualizes the distribution of number of chiral BSM particles we find in all models across all NABs. Again, we present plots corresponding to all three of our model counting methods. The bins with particularly large contents with respect to adjacent bins are labelled above the bins by their x-value in an ad hoc fashion to aid readability. In particular, at $0$ we find our vector-like extensions. (We can see $9$ of them in the bottommost plot, as we expect). We find that models with $15$, $30$, and $45$ chiral particles are common: we understand this to be the case because a SM family contains $15$ particles, and models with $1$, $2$, or $3$ extra families appear frequently.

Fig. \ref{733_plot} zooms in a little bit from the previous figure by taking a particular example, $(7,3,3)$, and exhibits the distribution of number of chiral BSM particles we find in each model. In particular, we see the $5$ vector-like extensions coming from $(7,3,3) \in \mathcal{E}$ appearing at $0$ chiral particles, our minimal chiral model $\rep{R}_{SM} + \rep{R}^{11}_C + \rep{R}_{VL}$ at $11$ chiral particles, a few instances of $15$ particle models, and so on. We note that the majority of models do indeed feature many chiral BSM particles, making them phenomenologically tenuous, as fermions with electroweak-scale masses are highly constrained.
 
Of those with extra chiral content, most have many extra particles that should have masses  near the electroweak scale.
The minimum number of extra chiral particles is 11, by the typical number is well over 50. In most of these models that means low mass exotic particles like fractionally charged hadrons and lepton, as well as multiply charged magnetic monopoles
are predicted. Even in the rare case where all the extra chiral particles have SM charges, they are still predicted to 
show up at low mass. Hence, making a viable phenomenology out of such model would be challenging to say the least. 
However, as they would definitely lead to new physics near the electroweak scale, those with the least extra chiral content beyond the three SM  families would be worthy of future study.
Perhaps an apt comparison is the low scale  ${\cal N}=1$ SUSY version of the SM, where the extra particle content creates phenomenological challenges. 

 As we mentioned in the introduction, our interest here has been in surveying  potentially viable quiver models with three family fermion spectra. Due to the vast number of possibilities, we have not focused attention on any particular models. However, we hope to study some of the phenomenological implications of the models presented here in future work and to put them in context with the literature.

\section{Appendix}
Here we prove that the minimal norm always generates particles whose hypercharges and hence their electric charges are rational fractions
of the electron charge, and as a result, charge quantization is preserved by the minimal norm choice.
\begin{theorem}
    \label{th}
    If a linear system $Ax = b$ ($A \in \mathbb{M}_{m, n}, b \in \mathbb{R}^m$) with rational coefficients has a solution, then the solution with least Euclidean norm is rational.
\end{theorem}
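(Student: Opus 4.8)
The plan is to use the orthogonal-decomposition characterization of the least-norm solution and then reduce rationality to a standard field-descent fact. Throughout I read ``rational coefficients'' as meaning both $A$ and $b$ have rational entries (this is necessary: with $A=(1)$ and $b=(\sqrt2)$ the least-norm solution is irrational). The solution set of $Ax=b$ is the affine subspace $x_0+\ker A$ for any particular solution $x_0$, and its element of smallest Euclidean norm is the unique solution lying in the orthogonal complement $(\ker A)^{\perp}$. First I would justify this: decomposing $x_0=u+w$ with $u\in\ker A$ and $w\in(\ker A)^{\perp}$, every solution has the form $w+v$ with $v\in\ker A$, and $\|w+v\|^{2}=\|w\|^{2}+\|v\|^{2}$ by orthogonality, so $w$ is the unique minimizer; uniqueness of the point in $(\ker A)^{\perp}$ follows since $\ker A\cap(\ker A)^{\perp}=\{0\}$.

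Because $(\ker A)^{\perp}=\mathrm{im}(A^{\mathsf T})$ (the row space of $A$), the least-norm solution has the form $x^{\star}=A^{\mathsf T}y$ for some $y\in\mathbb{R}^{m}$. Substituting this ansatz into $Ax^{\star}=b$ converts the problem into the square, symmetric, rational system $AA^{\mathsf T}y=b$. I would then check this auxiliary system is solvable: solvability of the original system means $b\in\mathrm{im}(A)$, and one has $\mathrm{im}(AA^{\mathsf T})=\mathrm{im}(A)$, so $b$ lies in the image of $AA^{\mathsf T}$ and a real $y$ exists. Any such $y$ produces $x^{\star}=A^{\mathsf T}y$, which by construction solves $Ax=b$ and lies in the row space, hence is exactly the least-norm solution (equivalently $x^{\star}=A^{+}b$ for the Moore--Penrose inverse $A^{+}$, consistent with the behavior of \texttt{LinearSolve} noted above). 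It then only remains to show $y$, and therefore $x^{\star}$, may be taken rational.

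The crux --- and the step I expect to carry all the weight --- is the purely field-theoretic lemma that a linear system with rational coefficient matrix and rational right-hand side which is solvable over $\mathbb{R}$ is already solvable over $\mathbb{Q}$. I would establish it through the Rouch\'e--Capelli criterion: a system is consistent precisely when the coefficient matrix and the augmented matrix have equal rank, and the rank of a rational matrix is independent of whether it is computed over $\mathbb{Q}$ or over $\mathbb{R}$ (being the size of the largest nonvanishing minor, a rational determinant). Thus consistency over $\mathbb{R}$ forces consistency over $\mathbb{Q}$; operationally, Gaussian elimination on the rational augmented matrix $[\,AA^{\mathsf T}\mid b\,]$ uses only field operations in $\mathbb{Q}$ and returns a rational $y$ whenever a real solution exists. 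Applying this to $AA^{\mathsf T}y=b$ yields $y\in\mathbb{Q}^{m}$, whence $x^{\star}=A^{\mathsf T}y\in\mathbb{Q}^{n}$. The remaining pieces are routine bookkeeping; the one identity worth verifying carefully is $\mathrm{im}(AA^{\mathsf T})=\mathrm{im}(A)$, which follows from the inclusion $\mathrm{im}(AA^{\mathsf T})\subseteq\mathrm{im}(A)$ together with the rank identity $\mathrm{rank}(AA^{\mathsf T})=\mathrm{rank}(A)$, the latter because $\ker(AA^{\mathsf T})=\ker(A^{\mathsf T})$ (as $AA^{\mathsf T}x=0$ forces $\|A^{\mathsf T}x\|^{2}=0$) and $\mathrm{rank}(A^{\mathsf T})=\mathrm{rank}(A)$.
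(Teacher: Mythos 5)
Your proof is correct, and while it opens the same way as the paper's --- both arguments characterize the least-norm solution as the unique element of the solution set lying in $(\ker A)^\perp$, and both use the Pythagorean decomposition $\|y+z\|^2=\|y\|^2+\|z\|^2$ to see minimality --- the rationality step, which is the crux, is handled by a genuinely different mechanism. The paper builds an explicit square nonsingular rational system: it appends a (rational) basis of $\ker(A)$ as extra rows to the row-reduced system, observes the resulting $n\times n$ matrix is invertible by rank--nullity and the independence of row space and null space, and reads off rationality from Cramer's rule. You instead write the least-norm solution as $x^\star=A^{\mathsf T}y$ using $(\ker A)^\perp=\mathrm{im}(A^{\mathsf T})$, pass to the normal-type system $AA^{\mathsf T}y=b$ (solvable because $\mathrm{im}(AA^{\mathsf T})=\mathrm{im}(A)$, which you verify via $\ker(AA^{\mathsf T})=\ker(A^{\mathsf T})$), and then invoke the field-descent fact that a rational system consistent over $\mathbb{R}$ is consistent over $\mathbb{Q}$, since rank is insensitive to the field extension. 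Your route has two advantages: it never needs to produce a rational basis of $\ker(A)$ (a point the paper leaves implicit --- an arbitrary kernel basis need not be rational, so one must take the basis produced by rational row reduction), and it lands directly on the Moore--Penrose formula $x^\star=A^{+}b$, which matches the \texttt{LinearSolve} behavior the paper relies on computationally. The paper's route is more self-contained in that it avoids the Rouch\'e--Capelli/field-descent lemma and gives a single explicit invertible system determining $x^\star$. One small point worth making explicit in your write-up: any two real solutions $y_1,y_2$ of $AA^{\mathsf T}y=b$ differ by an element of $\ker(AA^{\mathsf T})=\ker(A^{\mathsf T})$, so $A^{\mathsf T}y_1=A^{\mathsf T}y_2$; this (or simply the uniqueness of the solution in $(\ker A)^\perp$, which you do state) is what guarantees that the rational $y$ you obtain yields the same $x^\star$ as the real one.
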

\begin{proof}
    By hypothesis, we may let $p \in \mathbb{R}^n$ be a solution: then the space of solutions is $S = \ker(A) + p$. Consider the set $S \cap \ker(A)^\perp$: we claim that this \textit{i)} contains one element $y$ which \textit{ii)} is rational and \textit{iii)} is the least norm solution. 
    
    To see \textit{i)}, let $Ax = b$ row-reduce to $\Tilde{A}x = \Tilde{b}$ and let $v_1, \dots, v_\ell$ be a basis for $\ker(A)$. If $y \in S \cap \ker(A)^\perp$\footnote{We recall that the kernel of a map between vector spaces is the subset of the domain mapped to $0$ in the codomain, and we use $V^\perp$ to denote the orthogonal complement of a linear subspace $V$, or the subspace of vectors perpendicular to each vector $v \in V$.}, then $y$ satisfies $A'y = b'$ where $A'$ is achieved by taking the $k$ non-zero rows of $\Tilde{A}$ and appending $v_1, \dots, v_\ell$ as rows while $b'$ is the non-zero entries of $\Tilde{b}$ with zeros appended 
    $A'$ is an element of $\mathbb{M}_n$ (by the rank-nullity theorem) and is full rank (by the linear independence of row and null spaces), thus the solution $y$ exists and is unique.
    
    Then \textit{ii)} follows readily: if $A, b$ are rational then $A', b'$ are as well and the rationality of $y$ then follows from Cramer's rule.
    
    Finally, for \textit{iii)}, note that $x \in S$ decomposes uniquely as $y + z$ for some $z \in \ker(A)$; thus, $\|x\|^2 = \|y\|^2 + \|z\|^2$ by the orthogonality of $y$ and $z$, so $\|x\|$ is minimized for $z = 0$ or $x = y$.
\end{proof}
\vspace{.5in}
\noindent {\it Acknowledgments:} This work was supported by US DOE grant DE-SC0019235. ES would like to thank Connor Lehmacher for a useful conversation regarding Th. \ref{th}. TWK thanks the Aspen Center for Physics, which is supported by National Science Foundation grant PHY-1607611, for hospitality while this paper was being completed.

\end{document}